\newcommand*{\bra}[1]{\left\langle{#1}\right|} 
\newcommand*{\ket}[1]{\left| #1 \right\rangle}
\newcommand*{\commut}[1]{[ #1 ]}
\newtheorem{theorem}{Theorem}
\newtheorem{lemma}{Lemma}
\begin{document}

\author{Denys I. Bondar}
\email{dbondar@princeton.edu}
\affiliation{Department of Chemistry, Princeton University, Princeton, NJ 08544, USA} 

\author{Renan Cabrera}
\email{rcabrera@princeton.edu}
\affiliation{Department of Chemistry, Princeton University, Princeton, NJ 08544, USA} 

\author{Herschel A. Rabitz}
\email{hrabitz@princeton.edu}
\affiliation{Department of Chemistry, Princeton University, Princeton, NJ 08544, USA}

\title{Conceptual inconsistencies in finite-dimensional quantum and classical mechanics}

\begin{abstract}
Utilizing operational dynamic modeling [Phys. Rev. Lett. {\bf 109}, 190403 (2012)], we demonstrate that any finite-dimensional representation of quantum and classical dynamics violates the Ehrenfest theorems. Other peculiarities are also revealed, including the nonexistence of the free particle and ambiguity in defining potential forces. Non-Hermitian mechanics is shown to have the same problems. This work compromises a popular belief that finite-dimensional mechanics 
is a straightforward discretization of the corresponding infinite-dimensional formulation. 
\end{abstract}

\date{\today} 

\pacs{03.65.-w}
	
\maketitle 

\section{Introduction} 

Schr\"{o}dinger formulated quantum mechanics in terms of differential operators acting on complex valued wave functions. Heisenberg devised the matrix representation with vectors replacing continuous functions. The latter description seems more intuitive for processes in bound systems, while the former is natural for scattering problems, where de Broglie waves represent interacting particles propagating in the continuum. As ubiquitously known, these two quantum mechanical representations are merely special realizations of the unifying Hilbert space formulation, where the observables are self-adjoint operators acting on ket-vectors that abstract the wave function's notion. Moreover, Schr\"{o}dinger and Heisenberg mechanics are equivalent as long as the underlying Hilbert space is {\it infinite-dimensional}, which is a paraphrase of the fact that all complete infinite-dimensional spaces are alike.

However graceful such a quantum mechanical formalism is, the requirement of infinite-dimensionality appears to be excessive as far as applications are concerned. For example, whenever the Schr\"{o}dinger equation is solved numerically, it is always approximated by a finite matrix equation. Hence, a self-consistent formulation of quantum mechanics in a finite-dimensional Hilbert space is an ongoing problem \cite{Weyl1950, Schwinger1960, Schwinger2003, Pearle1973, Santhanam1976, Jagannathan1981, Gudder1981, Jagannathan1982, Jagannathan1983, Gudder1985, Gudder1986, Busch1993, Galetti1999, Hakioglu2000, DelaTorre2003a, DelaTorre2003b, Vourdas2004, Gazeau2006, Revzen2008, Bang2009, Sasaki2011}, especially considering quantum information science, quantum optics, and  physics of the Planck length, where finite-dimensionality finds new horizons. 

The purpose of this work is to pinpoint fundamental inconsistencies plaguing {\it any} attempt to formulate quantum as well as classical mechanics in a finite-dimensional framework. In particular, we will demonstrate the violation of the Ehrenfest theorems, nonexistence of the free particle, and  an ambiguity in defining potential forces. Possible solutions of these inconsistencies are outlined.

\section{Mathematical Background} 

We begin by constructing the coordinate $\hat{x}$ and momentum $\hat{p}$ operators. Utilizing group-theoretic arguments, Weyl demonstrated \cite{Weyl1950} 
\begin{align}\label{PhysOrigin_CCR}
	\left| \langle x \ket{p} \right|^2 = \mbox{const} \Longleftrightarrow  \commut{ \hat{x}, \hat{p} } = i \hbar,
\end{align}
(for alternative derivations see Refs. \cite{Schwinger2003, Bondar2011a}). The left-hand side of Eq. (\ref{PhysOrigin_CCR}) is well defined  in a finite-dimensional space, whereas the right-hand side of Eq. (\ref{PhysOrigin_CCR}) is not because no bounded operators obey the canonical commutation relation \cite{Weyl1950, Reed1980}. Thus, in the literature on finite-dimensional quantum mechanics, the coordinate and momentum operators are usually defined as those whose eigenvectors obey
\begin{align}\label{Deff_PX_as_MUBS}
	\langle x_n \ket{p_k} = \exp(i2\pi n k /N) / \sqrt{N}, 
\end{align}
where $n,k = -j, -j+1, \ldots, j-1, j$, $j =2N+1$, $\hat{x} \ket{x_n} = an\ket{x_n}$, $\hat{p} \ket{p_k} = 2\pi \hbar k/(aN) \ket{p_k}$, and $N$ is the Hilbert space's dimension (see Ref. \cite{DelaTorre2003a} for a detailed description of the notation). According to Eq. (\ref{Deff_PX_as_MUBS}), the coordinate and momentum eigenvectors form mutually unbiased bases, which means that if a system is prepared in a state from one of the bases, then all outcomes of the measurement with respect to the other basis are equally probable.

The finite-dimensional Schr\"{o}dinger equation is
\begin{align}\label{Sch_Eq_General_Form}
	i\hbar \ket{d\Psi(t)/dt} = \hat{H} \ket{\Psi(t)},
\end{align}
where there are at least two independent ways of introducing $\hat{H}$. First, the Hamiltonian   
\begin{align}\label{Infinite_Dim_Form_Hamiltonian}
	\hat{H} = \hat{p}^2/(2m) + U(\hat{x})
\end{align}
can be adapted with the specification that $\hat{x}$ and $\hat{p}$ are defined by Eq. (\ref{Deff_PX_as_MUBS}). This is nearly a tacit definition of the Hamiltonian in finite-dimensional quantum mechanics. An alternative definition of $\hat{H}$,  widely employed in numerical calculations (see, e.g., Ref. \cite{Kimball1934}), is obtained once the term $\hat{p}^2/(2m)$,   proportional to the second derivative in the infinite-dimensional coordinate representation, is approximated by finite differences. These two forms of the $N$-dimensional Hamiltonians are significantly different even though they converge to the same limit as $N \to \infty$.

Which definition is more fundamental? To answer this question, we employ Operational Dynamic Modeling (ODM) \cite{Bondar2011c}, a universal and systematic framework for deriving equations of motion from the evolution of the dynamical average values. In Ref. \cite{Bondar2011c}, along with a number of other applications, we utilized this method to infer the classical Liouville and Schr\"{o}dinger equations from the Ehrenfest theorems, 
\begin{align}
	m\frac{d}{dt} \bra{\Psi(t)} \hat{x} \ket{\Psi(t)} & = \bra{\Psi(t)} \hat{p} \ket{\Psi(t)}, 	\label{EhrenfestTh1} \\
	\frac{d}{dt} \bra{\Psi(t)} \hat{p} \ket{\Psi(t)} & = \bra{\Psi(t)} -U'(\hat{x}) \ket{\Psi(t)},	\label{EhrenfestTh2}
\end{align}
by assuming that the coordinate $\hat{x}$ and momentum $\hat{p}$ commuted in the former case and obeyed the canonical commutation relation in the latter.  

\section{Revealing Inconsistencies in Finite-dimensional Quantum and Classical Mechanics}

Closely following Ref. \cite{Bondar2011c}, we apply ODM to Eqs. (\ref{EhrenfestTh1}) and (\ref{EhrenfestTh2}): Stone's theorem (see, e.g., Ref. \cite{Reed1980, Araujo2008})  
guarantees the uniqueness of the Hermitian operator $\hat{H}$ in Eq. (\ref{Sch_Eq_General_Form}). Applying the chain rule to the left-hand side of Eqs. (\ref{EhrenfestTh1}) and (\ref{EhrenfestTh2}) and then utilizing Eq. (\ref{Sch_Eq_General_Form}), one obtains
\begin{align}
	im \commut{\hat{H}, \hat{x}} &= \hbar \hat{p},  \label{First_CommutEq_for_Hamiltonian} \\
	i\commut{\hat{H}, \hat{p}} &= -\hbar U'(\hat{x}). \label{Second_CommutEq_for_Hamiltonian}
\end{align}    
We shall demonstrate below that Eqs. (\ref{First_CommutEq_for_Hamiltonian}) and (\ref{Second_CommutEq_for_Hamiltonian}) cannot be satisfied by non-trivial finite-dimensional operators.    

\begin{theorem}\label{NoFreeFiniteDimQM_Theorem}
	Assume $\hat{p}$ and $\hat{H}$ are non-zero finite-dimensional Hermitian operators. If $\commut{ \hat{H}, \hat{p} } = 0$ then there is no finite-dimensional linear operator $\hat{x}$ such that $i \commut{ \hat{H}, \hat{x} } = \hat{p}$.
\end{theorem}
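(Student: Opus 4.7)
The plan is to derive a contradiction by computing $\Tr(\hat{p}^2)$ via the hypothesized identity. The key observation is that, in finite dimensions, the trace of any commutator vanishes---exactly the feature that fails in the unbounded infinite-dimensional setting where a genuine free particle actually exists.

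First I would multiply $i\commut{\hat{H},\hat{x}} = \hat{p}$ on the left by $\hat{p}$ and take the trace, obtaining
\begin{align}
\Tr(\hat{p}^2) = i\Tr\!\bigl(\hat{p}\hat{H}\hat{x}\bigr) - i\Tr\!\bigl(\hat{p}\hat{x}\hat{H}\bigr).
\end{align}
Cyclically permuting the second term yields $\Tr(\hat{p}\hat{x}\hat{H}) = \Tr(\hat{H}\hat{p}\hat{x})$, so the right-hand side collapses to
\begin{align}
\Tr(\hat{p}^2) = i\Tr\!\bigl(\commut{\hat{p},\hat{H}}\hat{x}\bigr) = 0,
\end{align}
where the last equality invokes the standing hypothesis $\commut{\hat{H},\hat{p}}=0$.

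Next I would use Hermiticity: $\hat{p}$ is unitarily diagonalizable with real eigenvalues $\lambda_1,\ldots,\lambda_N$, and $\Tr(\hat{p}^2) = \sum_i \lambda_i^2$. The vanishing of this sum of squares forces every $\lambda_i = 0$, so $\hat{p} = 0$, contradicting the hypothesis. There is no genuine technical obstacle---the whole argument is a two-line trace manipulation---but the conceptual point worth highlighting is that both the well-definedness of the trace and the cyclic identity it satisfies are features exclusive to finite dimensions. This is precisely why the infinite-dimensional free particle, with $\hat{H}=\hat{p}^2/(2m)$ commuting with $\hat{p}$ and producing a nontrivial $\commut{\hat{H},\hat{x}} \propto \hat{p}$, escapes the argument and has no finite-dimensional analog.
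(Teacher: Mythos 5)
Your proof is correct, but it takes a genuinely different route from the paper. The paper exploits Hermiticity of \emph{both} $\hat{H}$ and $\hat{p}$ together with $\commut{\hat{H},\hat{p}}=0$ to pick a joint eigenbasis $\{\ket{E_n}\}$, sandwiches $i\commut{\hat{H},\hat{x}}=\hat{p}$ between $\bra{E_n}$ and $\ket{E_m}$, and reads off $p_m=0$ from the diagonal entries $n=m$ (where the factor $E_n-E_m$ vanishes). You instead pair the identity against $\hat{p}$ and use cyclicity of the trace plus the commutativity hypothesis to force $\Tr(\hat{p}^2)=0$, then invoke Hermiticity of $\hat{p}$ alone to conclude $\hat{p}=0$; the step $\Tr(\hat{p}^2)=\sum_i\lambda_i^2=0\Rightarrow\hat{p}=0$ is where Hermiticity is genuinely needed, since for a general matrix $\Tr(\hat{p}^2)=0$ does not imply $\hat{p}=0$. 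The two arguments rest on the same finite-dimensional fact seen from two angles --- a commutator has vanishing diagonal in any eigenbasis of $\hat{H}$, equivalently vanishing trace against anything commuting with $\hat{H}$ --- and both correctly isolate why the infinite-dimensional free particle escapes (no trace, no cyclicity, no discrete joint spectrum). Your version buys a little extra generality: it never uses Hermiticity or even diagonalizability of $\hat{H}$, only $\commut{\hat{H},\hat{p}}=0$ and Hermiticity of $\hat{p}$, whereas the paper's appeal to a complete set of joint eigenvectors needs $\hat{H}$ Hermitian (or at least normal). The paper's version, in exchange, yields the off-diagonal information $i(E_n-E_m)\bra{E_n}\hat{x}\ket{E_m}=p_m\langle E_n\ket{E_m}$, which constrains any candidate $\hat{x}$ and is closer in spirit to the explicit construction in Eq.~(\ref{Ehrenfest_Compartible_GeneralHamiltonian}).
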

\begin{proof}
	Let $\ket{E_n}$ denote a joint eigenvector of $\hat{H}$ and $\hat{p}$ such that $\hat{H} \ket{E_n} = E_n \ket{E_n}$ and $\hat{p} \ket{E_n} = p_n \ket{E_n}$. Assuming that $i \commut{ \hat{H}, \hat{x} } = \hat{p}$, one obtains 
	\begin{align}
		i \bra{E_n} \commut{ \hat{H}, \hat{x} } \ket{E_m} = i (E_n - E_m) \bra{E_n} \hat{x} \ket{E_m} = p_m \langle E_n \ket{E_m}, 
	\end{align}
	$\forall n,m$.
	The special case of this identity, when $n=m$, reads $p_m = 0$, $\forall m$. This contradicts the assumption that $\hat{p}$ is a non-zero matrix.
\end{proof}

Theorem \ref{NoFreeFiniteDimQM_Theorem} physically implies the nonexistence of the ``free'' particle in finite-dimensional quantum mechanics. This statement is also a consequence of probability conservation, closely linked to Stone's theorem. If the wave function's normalization is constant and $\hat{x}$ is finite-dimensional, then the particle's motion is confined to the spatial interval bound by the smallest and largest eigenvalues of $\hat{x}$. This confinement must be realized by some force. Since standalone Eq. (\ref{First_CommutEq_for_Hamiltonian}) is solvable with respect to $\hat{H}$ [see Eq. (\ref{Ehrenfest_Compartible_GeneralHamiltonian}) below], then $i \commut{ \hat{H}, \hat{p} }/\hbar$ from Eq. (\ref{Second_CommutEq_for_Hamiltonian}) can be taken as the definition of this confining force. Note that the Ehrenfest theorems do not hold for a particle in a box even in the infinite-dimensional case \cite{Alonso2000}.

\begin{theorem}\label{NoDiffAsCommutator_Theorem}
	There are no finite-dimensional operators $\hat{A}$ and $\hat{B}$ such that 
	\begin{align}\label{NoDiffAsCommutator_Eq1}
		\commut{ f(\hat{A}), \hat{B} } =  f'(\hat{A})
	\end{align}
	for any ``good'' function $f(z)$.
\end{theorem}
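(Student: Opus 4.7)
The plan is to reduce this statement to the classical fact that the canonical commutation relation admits no finite-dimensional representation. Since the hypothesized identity $\commut{f(\hat{A}), \hat{B}} = f'(\hat{A})$ is required to hold for \emph{every} admissible $f$, one is free to substitute a particularly convenient choice. The simplest nontrivial such function is $f(z) = z$, for which $f'(z) = 1$, so that Eq. (\ref{NoDiffAsCommutator_Eq1}) collapses to the canonical commutation relation $\commut{\hat{A}, \hat{B}} = \hat{I}$, where $\hat{I}$ is the $N \times N$ identity operator.

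The key step is then to take the trace of both sides. By cyclicity of the trace on finite-dimensional matrices, the left-hand side satisfies $\Tr \commut{\hat{A}, \hat{B}} = \Tr(\hat{A}\hat{B}) - \Tr(\hat{B}\hat{A}) = 0$, while the right-hand side evaluates to $\Tr \hat{I} = N \geq 1$. The contradiction $0 = N$ establishes the theorem.

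There is essentially no obstacle here: the argument is a one-line invocation of the standard trace identity for commutators, the same mechanism that forbids a finite-dimensional realization of $\commut{\hat{x}, \hat{p}} = i \hbar$. The only subtlety worth flagging is the interpretation of the class of ``good'' functions: the proof requires only that the affine function $f(z) = z$ be admissible, which is certainly the intended reading, since excluding it would render the hypothesis uninteresting. Once Theorem \ref{NoDiffAsCommutator_Theorem} is in hand, its physical consequence — that Eqs. (\ref{First_CommutEq_for_Hamiltonian}) and (\ref{Second_CommutEq_for_Hamiltonian}) cannot be simultaneously satisfied by nontrivial finite-dimensional operators when $U'$ is a non-constant ``good'' function of $\hat{x}$ — follows by identifying $\hat{A}$ with $\hat{x}$ and $\hat{B}$ with an appropriate multiple of $\hat{H}$, matching the ODM-based narrative of the preceding section.
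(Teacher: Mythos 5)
Your proof is correct under your reading of the hypothesis, but it takes a different route from the paper, and the difference is not merely cosmetic. The paper does not substitute $f(z)=z$; instead it defines ``good'' functions through their Fourier representation, $f(\hat{A}) = \int g(k)\, e^{ik\hat{A}}\, dk$, observes that validity of Eq.~(\ref{NoDiffAsCommutator_Eq1}) for all such $f$ forces the pointwise identity $\commut{e^{ik\hat{A}}, \hat{B}} = ik\, e^{ik\hat{A}}$, and then differentiates in $k$ at $k=0$ to land on $\commut{\hat{A},\hat{B}} = 1$ --- the same endpoint you reach, after which the impossibility in finite dimensions (your trace argument, which the paper leaves implicit by appeal to the remark following Eq.~(\ref{PhysOrigin_CCR})) finishes the job. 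What the paper's detour buys is robustness to a restrictive interpretation of ``good'': if the admissible class consists of functions possessing honest Fourier transforms (e.g.\ Schwartz or $L^1$ functions), then the unbounded function $f(z)=z$ is \emph{not} in the class, and your one-line substitution is unavailable, whereas the exponential-family argument still goes through because the $e^{ikz}$ enter only as integration kernels, not as members of the class themselves. You flag this subtlety honestly, and given the informality of the theorem statement your reading is defensible, but you should be aware that your proof genuinely depends on it while the paper's does not. Your explicit trace computation is a welcome addition --- the paper asserts the contradiction $\commut{\hat{A},\hat{B}}=1$ without spelling out why it is one. Your closing identification for Eq.~(\ref{Second_CommutEq_for_Hamiltonian}) is slightly off: the natural reading is $\hat{A}=\hat{x}$, $f=U$, and $\hat{B}$ a multiple of $\hat{p}$ (not of $\hat{H}$), which is also what underlies the paper's remark about the two inequivalent definitions of the potential force.
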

\begin{proof}
	Let a function of an operator be defined as  (see, e.g., Ref. \cite{Nazaikinskii1992})
	\begin{align}
		f(\hat{A}) \coloneqq \int g(k) e^{ik\hat{A}} dk, \quad f'(\hat{A}) \coloneqq \int ik g(k) e^{ik\hat{A}} dk, \notag
	\end{align}
	where $g(k)$ is the Fourier transform of a function $f(z)$. The correctness of Eq. (\ref{NoDiffAsCommutator_Eq1}) implies 
	\begin{align}
		\commut{ e^{ik\hat{A}}, \hat{B} } = ik e^{ik\hat{A}}.
	\end{align}
	Differentiating the last identity with respect to $k$ and setting $k=0$, we reach the contradiction $\commut{ \hat{A}, \hat{B} } = 1$.
\end{proof}
Theorem \ref{NoDiffAsCommutator_Theorem} establishes not only the nonsolvability of  Eq. (\ref{Second_CommutEq_for_Hamiltonian}) but also two nonequivalent definitions ($\hat{F} = - U'(\hat{x})$ and $\hat{F} = i\commut{U(\hat{x}), \hat{p}}/\hbar$) of a potential force $\hat{F}$ in a finite-dimensional framework. The latter definition should be preferred due to the comment after Theorem \ref{NoFreeFiniteDimQM_Theorem}.

As thoroughly elaborated in Ref. \cite{Bondar2011c}, the Liouvillian $\hat{L}$, the classical dynamics' generator, is a solution of the equations: $im\commut{ \hat{L}, \hat{x} } = \hat{p}$ and $i\commut{\hat{L}, \hat{p}} = -U'(\hat{x})$ with commuting $\hat{x}$ and $\hat{p}$. The following statement establishes that finite-dimensional classical mechanics is more convoluted than the quantum analog:
\begin{theorem}\label{NoFiniteDimClassicalMecahnics}
	Let $\hat{x}$ and $\hat{p}$ be non-zero finite-dimensional Hermitian operators with $\commut{\hat{x}, \hat{p}} = 0$. There is no finite-dimensional operator $\hat{L}$ such that $i\commut{ \hat{L}, \hat{x} } = \hat{p}$. 
\end{theorem}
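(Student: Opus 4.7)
Theorem \ref{NoFiniteDimClassicalMecahnics} is structurally an echo of Theorem \ref{NoFreeFiniteDimQM_Theorem}, with $\hat{x}$ now playing the role previously played by $\hat{H}$: in both cases one has two commuting Hermitian operators and is asked whether a third operator can generate one of them as the commutator with the other. The plan is therefore to recycle the matrix-element strategy used in the proof of Theorem \ref{NoFreeFiniteDimQM_Theorem}. The only new ingredient is an appeal to the spectral theorem for commuting Hermitian matrices: $\hat{x}$ and $\hat{p}$ admit a common orthonormal eigenbasis, so a joint eigendecomposition is available from the outset.

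Concretely, I would fix such a joint basis $\{\ket{n}\}$ with $\hat{x}\ket{n} = x_n\ket{n}$ and $\hat{p}\ket{n} = p_n\ket{n}$ and sandwich the hypothetical identity $i\Commut{\hat{L},\hat{x}} = \hat{p}$ between $\bra{n}$ and $\ket{m}$. Expanding the commutator and using the diagonality of $\hat{p}$ in this basis produces
\begin{align}
i(x_m - x_n)\bra{n}\hat{L}\ket{m} = p_n\,\delta_{nm}, \quad \forall n,m. \notag
\end{align}
Setting $n=m$, the left-hand side vanishes identically, so $p_n = 0$ for every $n$; consequently $\hat{p} = 0$, contradicting the hypothesis that $\hat{p}$ is non-zero.

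I do not foresee a substantive obstacle here. One might worry that the hypotheses on $\hat{L}$ are weaker than those placed on $\hat{x}$ in Theorem \ref{NoFreeFiniteDimQM_Theorem} (no Hermiticity, no commutation constraint), but the argument never invokes any property of $\hat{L}$ beyond linearity. Likewise, possible degeneracies in the spectra of $\hat{x}$ or $\hat{p}$ are harmless, since the contradiction is extracted from the diagonal entries $n=m$, where the prefactor $x_m - x_n$ vanishes regardless of the spectrum. The physical moral is that classical mechanics is at least as inhospitable to a finite-dimensional representation as the quantum case: already the single Ehrenfest-type relation $im\commut{\hat{L},\hat{x}} = \hat{p}$ is unsolvable once $\hat{x}$ and $\hat{p}$ are required to commute.
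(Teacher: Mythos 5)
Your proof is correct and follows essentially the same route as the paper: the paper disposes of this theorem by noting it is Theorem \ref{NoFreeFiniteDimQM_Theorem} under the relabeling $\hat{H} \leftrightarrow \hat{x}$, $\hat{x} \leftrightarrow \hat{L}$, and your explicit computation --- sandwiching $i\commut{\hat{L},\hat{x}} = \hat{p}$ in a joint eigenbasis of the commuting Hermitian pair and reading off the vanishing diagonal entries --- is precisely that relabeled argument written out in full.
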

\begin{proof}
	Up to the notation ($\hat{H} \leftrightarrow \hat{x}$ and $\hat{x} \leftrightarrow \hat{L}$), this theorem is the same as Theorem \ref{NoFreeFiniteDimQM_Theorem}. 
\end{proof}

\section{Inconsistencies and Numerical Calculations}

Finite-dimensional quantum mechanics' most important application is in numerical simulations of the Schr\"{o}dinger equation. We shall now show that the accuracy of numerical calculations appears not to depend on whether a finite-dimensional Hamiltonian obeys the Ehrenfest theorem (\ref{EhrenfestTh1}). Even though we analyze a specific system with a few different choices for its Hamiltonian, the conclusions are universal because the root of the problem lies in the nonexistence of bounded operators obeying the canonical commutation relation. 

Assuming $\hat{x}$ is Hermitian and has a non-degenerate spectrum, the solution of Eq. (\ref{First_CommutEq_for_Hamiltonian}) is
\begin{align}\label{Ehrenfest_Compartible_GeneralHamiltonian}
	\bra{x_k} \hat{H}^{\star} \ket{x_l} = \left\{
		\begin{array}{ll}
			U(x_k) + C & \mbox{if } k=l \\
			\frac{i \hbar \bra{x_k} \hat{p} \ket{x_l} }{m(x_k - x_l)} &\mbox{otherwise}
		\end{array}
	\right.
\end{align}
where $\hat{x} \ket{x_l} = x_l \ket{x_l}$ and $C$ is an arbitrary real constant. For any definition of the momentum operator $\hat{p}$, Eq. (\ref{Ehrenfest_Compartible_GeneralHamiltonian}) provides the most general Hamiltonian obeying the first Ehrenfest theorem [Eq. (\ref{EhrenfestTh1})].

Choosing $\hat{x}$ with equally spaced eigenvalues (with the step $a$) and utilizing the approximation $d \Psi(x_k) /dx \approx [ \Psi(x_{k+1}) - \Psi(x_{k-1}) ] / (2a)$, we can define the momentum operator: $\bra{x_k} \hat{p}_{fd} \ket{x_l} = -i\hbar\left( \delta_{l,k+1} - \delta_{l,k-1} \right)/(2a)$. Then, the corresponding Hamiltonian from Eq. (\ref{Ehrenfest_Compartible_GeneralHamiltonian}), 
\begin{align}\label{FiniteDiff_approx_Hamiltonian}
	\bra{x_k} \hat{H}^{\star}_{fd} \ket{x_l} = \frac{-\hbar^2}{2ma^2} \left( \delta_{l,k+1} + \delta_{l,k-1} \right) +  [U(x_k) + C]\delta_{l,k},
\end{align}
satisfying the Ehrenfest theorem (\ref{EhrenfestTh1}), takes the form of the simplest finite-difference approximation for the differential operator: $-\hbar^2/(2m) d^2/dx^2 + U(x)$. 

Let $\hat{H}_{mub}^{\star}$ denote the Hamiltonian (\ref{Ehrenfest_Compartible_GeneralHamiltonian}) after substituting $\hat{p}$ from Eq. (\ref{Deff_PX_as_MUBS}), and $\hat{H}_{mub}$ be the Hamiltonian (\ref{Infinite_Dim_Form_Hamiltonian}) with the same $\hat{p}$. Contrary to $\hat{H}_{mub}^{\star}$, the matrix $\hat{H}_{mub}$ does not satisfy the Ehrenfest theorem (\ref{EhrenfestTh1}).

Figures \ref{Fig_SpectraComparison} and \ref{Fig_EigenvectorComparison} depict the eigenvalues and the twentieth eigenvector of the Hamiltonians $\hat{H}^{\star}_{fd}$, $\hat{H}_{mub}^{\star}$, and  $\hat{H}_{mub}$ for the oscillator, 
\begin{align}\label{SingularHarmonicOscil_Hamiltonian}
	U(x) = \frac{\omega^2}{8} x^2  + \frac{g}{4x^2}, \qquad 0 \leqslant x < \infty, 
\end{align}
which is exactly solvable in the infinite-dimensional case \cite{Calogero1969}. The values of the coordinate step, $a$, for $\hat{H}_{mub}^{\star}$ and $\hat{H}_{mub}$ are selected to minimize the largest eigenvalues of each Hamiltonian, which leads to the spectra of  $\hat{H}_{mub}^{\star}$ and $\hat{H}_{mub}$ (for a finite $N$) best fitting the exact energy in the infinite-dimensional case [see Figs. \ref{Fig_SpectraComparison}(b) and \ref{Fig_SpectraComparison}(c)]. However, the same recipe applied to $\hat{H}_{fd}^{\star}$ does not provide a good fit, and the step size for $\hat{H}_{fd}^{\star}$ is manually adjusted to match the exact energies. This exercise  demonstrates [Fig. \ref{Fig_EigenvectorComparison}(a)] that even though the Hamiltonian $\hat{H}_{fd}^{\star}$ reproduces the exact energy better than either $\hat{H}_{mub}^{\star}$ or $\hat{H}_{mub}$, it yields poor quality excited states. Finally, note that the Hamiltonian $\hat{H}_{mub}^{\star}$, satisfying Ehrenfest theorem (\ref{EhrenfestTh1}), provides results of the same quality as $\hat{H}_{mub}$, violating Eq.  (\ref{EhrenfestTh1}). These observations indicate that there is no correlation between the  numerical calculations' accuracy and the Ehrenfest theorem's validity. Note, however, that any disagreement between the exact and numerical results in Figs. \ref{Fig_SpectraComparison} and \ref{Fig_EigenvectorComparison} will exponentially increase with the number of spatial dimensions.

\begin{figure}
	\begin{center}
		\includegraphics[scale=0.5]{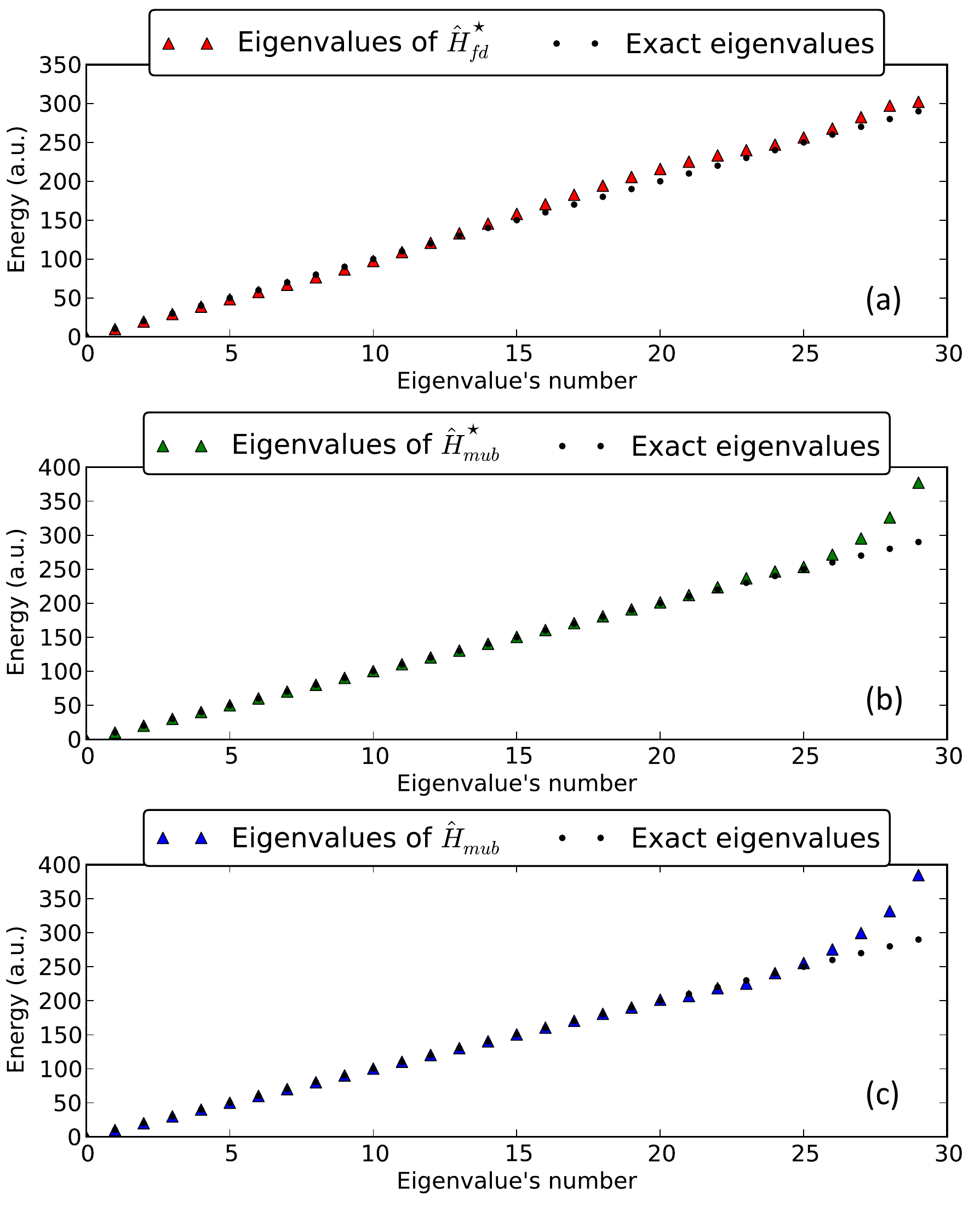}
		\caption{(Color on-line) The exact energies in the infinite-dimensional case \cite{Calogero1969} vs eigenvalues of 30-dimensional Hamiltonians for the singular harmonic oscillator (\ref{SingularHarmonicOscil_Hamiltonian}). $\hat{H}_{fd}^{\star}$ is given by Eq. (\ref{FiniteDiff_approx_Hamiltonian}). $\hat{H}_{mub}^{\star}$ is from Eq. (\ref{Ehrenfest_Compartible_GeneralHamiltonian}) with the momentum operator from Eq. (\ref{Deff_PX_as_MUBS}). $\hat{H}_{mub}$ is from Eq. (\ref{Infinite_Dim_Form_Hamiltonian})  with the momentum operator from Eq. (\ref{Deff_PX_as_MUBS}).  The Hamiltonians $\hat{H}_{fd}^{\star}$ and $\hat{H}_{mub}^{\star}$ satisfy the Ehrenfest theorem (\ref{EhrenfestTh1}) while $\hat{H}_{mub}$ does not.  The values of the spatial step, $a$, are (a) $0.09$, (b) $0.145$, (c) $0.145$. The other parameters are $\hbar=m=1$, $g=1$, $\omega=10$, and $N=30$.}
		\label{Fig_SpectraComparison}
	\end{center}
\end{figure}
\begin{figure}
	\begin{center}
		\includegraphics[scale=0.5]{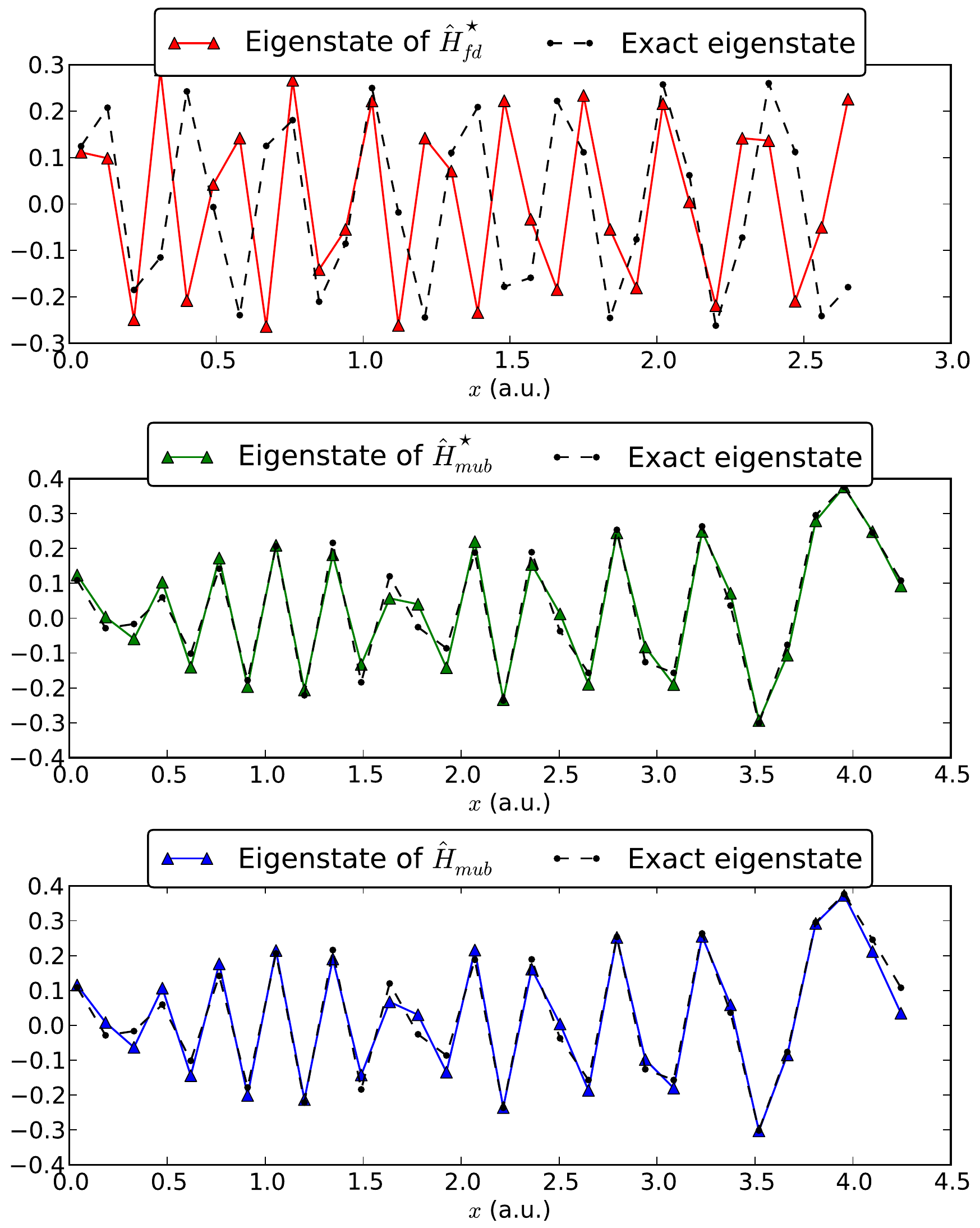}
		\caption{(Color on-line) The exact twentieth eigenstate in the infinite-dimensional case \cite{Calogero1969} vs twentieth eigenvectors of 30-dimensional Hamiltonians for the oscillator (\ref{SingularHarmonicOscil_Hamiltonian}). See Fig. \ref{Fig_SpectraComparison} regarding the parameters' values and notation's description.}
		\label{Fig_EigenvectorComparison}
	\end{center}
\end{figure}

\section{Non-Hermitian Hamiltonian as an Unsuccessful Attempt to Fix Inconsistencies}

Non-Hermitian Hamiltonians are widely employed to model resonant and unbound states (see, e.g., Refs. \cite{Moiseyev2011, Graefe2011} and references therein) as well as a handy trick to avoid numerical artifacts related to the finiteness of spatial grids (see, e.g., Ref. \cite{Manolopoulos2002}). The Hamiltonian's non-Hermitianity leads to non-conservation of the wave function's normalization. The normalization decreasing over time is interpreted as a quantum particle leaving the spatial interval bound by the coordinate's smallest and largest eigenvalues. Therefore, it might be anticipated that this will at least restore the free particle's notion. Despite great utility, non-Hermitian Hamiltonians do not resolve the conceptual difficulties, as we shall demonstrate now.

First, re-applying ODM to Eqs. (\ref{EhrenfestTh1}) and (\ref{EhrenfestTh2}) with the condition $\hat{H} \neq \hat{H}^{\dagger}$, the following generalizations of Eqs. (\ref{First_CommutEq_for_Hamiltonian}) and (\ref{Second_CommutEq_for_Hamiltonian}) are obtained:
\begin{align}
	im ( \hat{H}^{\dagger} \hat{x} - \hat{x} \hat{H} ) = \hbar \hat{p}, \quad 
	i (  \hat{H}^{\dagger} \hat{p} - \hat{p} \hat{H} ) = -\hbar U'(\hat{x}).   
\end{align}
\begin{lemma}\label{NonHermitian_Lemma1}
	Assume that $\hat{x}$, $\hat{p}$, and $\hat{H}$ are bounded. If $\sigma(\hat{H}^{\dagger}) \cap \sigma(\hat{H}) = \emptyset$ and
	\begin{align}
		& im ( \hat{H}^{\dagger} \hat{x} - \hat{x} \hat{H} ) = \hbar \hat{p}, \label{First_OperEq_for_NonHer_FreeHamiltonian} \\
		& \hat{H}^{\dagger} \hat{p} = \hat{p} \hat{H}, \label{Second_OperEq_for_NonHer_FreeHamiltonian}
	\end{align}
	then $\hat{p} = \hat{x} = 0$. Here, $\sigma(\hat{A})$ denotes the spectrum of an operator $\hat{A}$.
\end{lemma}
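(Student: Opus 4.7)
The plan is to reduce both operator equations to homogeneous Sylvester equations and invoke the Sylvester--Rosenblum theorem, which states that for bounded operators $A$, $B$ on a Banach space with $\sigma(A) \cap \sigma(B) = \emptyset$, the map $X \mapsto AX - XB$ is a bijection on the space of bounded operators (equivalently, the spectrum of this superoperator equals $\sigma(A) - \sigma(B)$, hence contains $0$ only when the spectra meet). Under our hypothesis $\sigma(\hat{H}^{\dagger}) \cap \sigma(\hat{H}) = \emptyset$, this superoperator has trivial kernel.

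First I would rewrite Eq. (\ref{Second_OperEq_for_NonHer_FreeHamiltonian}) as the homogeneous Sylvester equation
\begin{align}
\hat{H}^{\dagger} \hat{p} - \hat{p}\hat{H} = 0. \notag
\end{align}
By the observation above, its only bounded solution is $\hat{p} = 0$. Substituting this into Eq. (\ref{First_OperEq_for_NonHer_FreeHamiltonian}) collapses its right-hand side to zero, giving a second homogeneous Sylvester equation
\begin{align}
\hat{H}^{\dagger} \hat{x} - \hat{x}\hat{H} = 0, \notag
\end{align}
whose unique bounded solution is $\hat{x} = 0$, completing the argument.

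The main obstacle, and essentially the only nontrivial ingredient, is justifying the Sylvester--Rosenblum step in the bounded (possibly infinite-dimensional) setting. In finite dimensions this is elementary: pick joint bases diagonalizing $\hat{H}^{\dagger}$ and $\hat{H}$, whereupon the equation decouples into scalar equations $(\lambda - \mu) X_{\lambda\mu} = 0$ with $\lambda - \mu \neq 0$ by disjointness of spectra. For general bounded operators one appeals to the theorem of Rosenblum (or the Kleinecke--Lumer identity for the spectrum of the inner derivation $\delta_{A,B}: X \mapsto AX - XB$), which the lemma's boundedness assumption is precisely tailored to invoke. Once this is in hand, the deduction $\hat{p} = 0 \Rightarrow \hat{x} = 0$ is purely algebraic, so the whole proof reduces to two applications of the same bijectivity fact.
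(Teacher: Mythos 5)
Your proof is correct, and it reaches the conclusion by a genuinely more economical route than the paper. Both arguments rest on the Sylvester--Rosenblum theorem, but you use only its qualitative content --- injectivity of the map $X \mapsto \hat{H}^{\dagger} X - X \hat{H}$ when $\sigma(\hat{H}^{\dagger}) \cap \sigma(\hat{H}) = \emptyset$ --- applied twice: first to Eq.~(\ref{Second_OperEq_for_NonHer_FreeHamiltonian}), which is already a homogeneous Sylvester equation and hence forces $\hat{p} = 0$ with no input from Eq.~(\ref{First_OperEq_for_NonHer_FreeHamiltonian}); then to Eq.~(\ref{First_OperEq_for_NonHer_FreeHamiltonian}), which becomes homogeneous and forces $\hat{x} = 0$. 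The paper instead works in the opposite order: it invokes Rosenblum's explicit contour-integral formula for the unique solution $\hat{x}$ of the inhomogeneous equation (\ref{First_OperEq_for_NonHer_FreeHamiltonian}), uses the intertwining relation (\ref{Second_OperEq_for_NonHer_FreeHamiltonian}) in the form $(\hat{H}^{\dagger}-\xi)^{-1}\hat{p} = \hat{p}(\hat{H}-\xi)^{-1}$ to pull $\hat{p}$ out of the integral, and observes that $\oint_{\Gamma} (\hat{H}-\xi)^{-2}\, d\xi = 0$ because the integrand is an exact derivative; this yields $\hat{x}=0$ first and then $\hat{p}=0$ from Eq.~(\ref{First_OperEq_for_NonHer_FreeHamiltonian}). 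Your version avoids the integral representation entirely and makes transparent that each vanishing follows from a single application of the same bijectivity fact; the paper's version has the minor virtue of exhibiting the explicit solution formula. One small caveat in your finite-dimensional aside: not every matrix is diagonalizable, so ``pick joint bases diagonalizing $\hat{H}^{\dagger}$ and $\hat{H}$'' should be replaced by a Schur or Jordan triangularization (the uniqueness conclusion for disjoint spectra still holds); since your main argument appeals to the general Rosenblum theorem anyway, this does not affect the proof.
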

\begin{proof}
	First, we employ the Rosenblum theorem \cite{Rosenblum1956} (see Theorem 9.3 in Ref. \cite{Bhatia1997}) to Eq. (\ref{First_OperEq_for_NonHer_FreeHamiltonian}) and then employ Eq. (\ref{Second_OperEq_for_NonHer_FreeHamiltonian}) to show 
	\begin{align}
		im\hat{x} =& \frac{\hbar}{2\pi i} \oint_{\Gamma} d\xi ( \hat{H}^{\dagger} - \xi )^{-1} \hat{p} ( \hat{H} - \xi )^{-1} \notag\\
			=&  \frac{\hbar}{2\pi i} \hat{p} \oint_{\Gamma} d\xi ( \hat{H} - \xi )^{-2}  = 0,
	\end{align}
	where $\Gamma$ denotes a union of closed contours in the complex plane with total winding numbers $1$ around $\sigma(\hat{H}^{\dagger})$ and $0$ around $\sigma(\hat{H})$.	
\end{proof}

\begin{theorem}\label{NonHermitian_Theorem1}
	Assume i) $\hat{H}$, $\hat{x}$, and $\hat{p}$ are finite-dimensional operators satisfying Eqs. (\ref{First_OperEq_for_NonHer_FreeHamiltonian}) and (\ref{Second_OperEq_for_NonHer_FreeHamiltonian}); ii) $\lim_{t \to +\infty} \left| \langle \Psi(t) \ket{\Psi(t)} \right| < \infty$, where $\ket{\Psi(t)}$ is a solution of Eq. (\ref{Sch_Eq_General_Form}); iii) the eigenvectors $\{ \ket{E_n} \}_{n=1}^N$ ($\hat{H}\ket{E_n} = E_n \ket{E_n}$, $n \in I \coloneqq \{1,2,\ldots,N\}$) span the entire Hilbert space. Then, $\hat{p} = 0$.
\end{theorem}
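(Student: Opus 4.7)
The plan is to combine the two non-Hermitian operator identities (\ref{First_OperEq_for_NonHer_FreeHamiltonian})--(\ref{Second_OperEq_for_NonHer_FreeHamiltonian}) with the hypotheses (ii)--(iii) and show that $\hat{p}\ket{E_n}=0$ for every $n$; by the completeness of $\{\ket{E_n}\}$, this is equivalent to $\hat{p}=0$. The opening move is to locate the spectrum of $\hat{H}$. Using (iii), any solution of (\ref{Sch_Eq_General_Form}) expands as $\ket{\Psi(t)}=\sum_n c_n e^{-iE_n t/\hbar}\ket{E_n}$, so choosing the initial condition to be each eigenstate in turn and applying (ii) forces $|e^{-iE_n t/\hbar}|$ to remain bounded as $t\to+\infty$; equivalently, $\operatorname{Im}(E_n)\leq 0$ for every $n$.

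Next I would exploit (\ref{Second_OperEq_for_NonHer_FreeHamiltonian}) eigenstate by eigenstate: $\hat{H}^{\dagger}\hat{p}\ket{E_n}=\hat{p}\hat{H}\ket{E_n}=E_n\hat{p}\ket{E_n}$, so $\hat{p}\ket{E_n}$ is either zero or a nonzero eigenvector of $\hat{H}^{\dagger}$ with eigenvalue $E_n$. Because $\sigma(\hat{H}^{\dagger})=\overline{\sigma(\hat{H})}$, the nonzero alternative requires $E_n=\bar{E}_m$ for some index $m$. Combined with the spectral bound of the previous step, $\operatorname{Im}(E_n)\leq 0$ and $\operatorname{Im}(E_m)=-\operatorname{Im}(E_n)\leq 0$ together force $E_n\in\mathbb{R}$. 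So whenever $\hat{p}\ket{E_n}\neq 0$, $E_n$ is real and $\hat{p}\ket{E_n}$ lies entirely inside the $E_n$-eigenspace of $\hat{H}^{\dagger}$.

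The decisive step is then to feed this back into (\ref{First_OperEq_for_NonHer_FreeHamiltonian}), which applied to $\ket{E_n}$ rearranges to $im(\hat{H}^{\dagger}-E_n)\hat{x}\ket{E_n}=\hbar\hat{p}\ket{E_n}$. By (iii) the biorthogonal dual basis $\{\ket{F_m}\}$ diagonalizes $\hat{H}^{\dagger}$ with eigenvalues $\bar{E}_m$; expanding $\hat{x}\ket{E_n}=\sum_m \gamma_m\ket{F_m}$, the left-hand side becomes $im\sum_m\gamma_m(\bar{E}_m-E_n)\ket{F_m}$, whose projection onto the $E_n$-eigenspace of $\hat{H}^{\dagger}$ is identically zero. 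The right-hand side, however, is contained in precisely that eigenspace by the previous paragraph. Equating the two therefore forces $\hat{p}\ket{E_n}=0$; ranging over $n$ and invoking (iii) yields $\hat{p}=0$.

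The main obstacle I anticipate is the logical status of (ii): my argument needs the norm bound to apply to every eigenstate, which is the natural reading of (ii) in the ODM framework but slightly stronger than its literal wording. Once $\operatorname{Im}(E_n)\leq 0$ is granted, the remaining manipulation is a purely algebraic matching of the two eigenspaces of $\hat{H}^{\dagger}$ that the commutation identities single out, and no additional analytic input (such as the contour integral of Lemma \ref{NonHermitian_Lemma1}) is needed.
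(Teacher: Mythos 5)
Your proof is correct and follows essentially the same route as the paper's: both arguments sandwich Eqs.~(\ref{First_OperEq_for_NonHer_FreeHamiltonian}) and (\ref{Second_OperEq_for_NonHer_FreeHamiltonian}) between eigenvectors of $\hat{H}$, use assumption (ii) to force $\Im(E_n) \leqslant 0$, and observe that the second relation confines the nonzero components of $\hat{p}$ to index pairs with $E_k^{*} = E_l$, which is exactly where the first relation annihilates them. The only differences are presentational --- you phrase the matrix elements through the biorthogonal dual basis and its spectral projections, and in doing so you bypass the appeal to Lemma~\ref{NonHermitian_Lemma1}, which the paper invokes only to establish that $\hat{H}$ possesses a real eigenvalue.
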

\begin{proof}
	Suppose $\hat{p} \neq 0$. According to Lyapunov stability theory \cite{Lyapunov1992}, the second assumption implies that $\Im(E_n) \leqslant 0$, $\forall n\in I$. Lemma \ref{NonHermitian_Lemma1} guarantees that $\hat{H}$ has at least a single real eigenvalue. Introduce $R \coloneqq \{ n \in I \, | \, \Im(E_n) = 0 \} \neq \emptyset$ -- the set of the real eigenvalues' indices. 
	``Sandwiching'' Eqs. (\ref{First_OperEq_for_NonHer_FreeHamiltonian}) and (\ref{Second_OperEq_for_NonHer_FreeHamiltonian}) leads to
	\begin{align}
		& im ( E_k^* - E_l ) \bra{E_k} \hat{x} \ket{E_l} = \hbar \bra{E_k} \hat{p} \ket{E_l}, \label{NonHermTh1_Eq1}\\
		& ( E_k^* - E_l ) \bra{E_k} \hat{p} \ket{E_l} = 0, \qquad \forall k,l \in I. \label{NonHermTh1_Eq2}
	\end{align}
	According to Eq. (\ref{NonHermTh1_Eq2}), $\bra{E_k} \hat{p} \ket{E_l}$ may be non-zero only for $(k,l) \in Q \coloneqq \{ (k,l) \in R \times R \, | \, E_k = E_l \}$. However, Eq. (\ref{NonHermTh1_Eq1}) enforces $\bra{E_k} \hat{p} \ket{E_l} = 0$, $\forall (k,l) \in Q$. Therefore, we reach the contradiction $\hat{p} = 0$.
\end{proof}

Theorem \ref{NonHermitian_Theorem1}, being a generalization of Theorems \ref{NoFreeFiniteDimQM_Theorem} and \ref{NoFiniteDimClassicalMecahnics}, disproves the existence of the free quantum and classical particle within a non-Hermitian setting.

\section{Outlook}

Utilizing ODM \cite{Bondar2011c}, we demonstrated that {\it all} finite-dimensional representations of quantum and classical dynamics  violate the second Ehrenfest theorem [Eq. (\ref{EhrenfestTh2})], while the first Ehrenfest theorem [Eq. (\ref{EhrenfestTh1})] may be satisfied under special circumstances [Eq. (\ref{Ehrenfest_Compartible_GeneralHamiltonian})].  Nonexistence of the free particle case and the ambiguity in defining potential forces were also established. The fundamental reason behind these inconsistencies is the absence of bounded operators obeying the canonical commutation relation. These conclusions fundamentally bound the accuracy achievable by the current paradigm of numerical simulations.

The unveiled inconsistencies may be circumvented in some circumstances. A quantum mechanical system simulated on a continuous variable (i.e., infinite-dimensional) quantum computer \cite{braunstein2010quantum} would not be affected by such difficulties. Additionally, there is a solution not using a quantum computer. In particular, a key element in proving theorems \ref{NoFreeFiniteDimQM_Theorem} and \ref{NoFiniteDimClassicalMecahnics} is the non-existence of a finite real number $r$ such that $0\cdot r \neq 0$. However, this equation has a solution if infinitesimal and infinitely-large numbers are included into the set of real numbers. Thus, some of the no-go theorems may be avoided by utilizing nonstandard analysis \cite{robert2003nonstandard, Almeida2004}, where such extensions are rigorously implemented. An adaptation of the latter approach to physical simulations would be a computational paradigm shift.

We hope that the current work challenges a widespread belief that finite-dimensional quantum mechanics is a straightforward discretization of the corresponding continuous formulae without conceptual consequences, and further motivates exploration of the multifaceted dichotomy between finite- and infinite-dimensional cases. Attesting to this point of view, a recent paper \cite{Fritz2012} contains examples of physically relevant relations realizable only in finite-dimensions and nonexistent in the infinite-dimensional Hilbert space.

\acknowledgments The authors thank Michael Spanner, Serguei Patchkovskii, and Tobias Fritz for valuable comments. Financial support from NSF and ARO is acknowledged.

\bibliography{finite_dim_qm}
\end{document}